\newtheorem{claim}{Claim}
\newtheorem{proposition}{Proposition}
\newtheorem{lemma}{Lemma}
\newtheorem*{remark}{Remark}
\newcommand{\Z}{\mathbb{Z}}
\newcommand{\N}{\mathbb{N}}
\newcommand{\Npar}[1]{\N_{\geq #1}}
\newcommand{\Pos}{\Npar{1}}
\newcommand{\states}{Q}
\newcommand{\step}[3]{#1 \stackrel{#2}{\leadsto} #3}
\newcommand{\Nrun}[2]{\xymatrix{#1 \ar[r] & #2}}
\newcommand{\NrunOp}[2]{\xymatrix{#1 & \ar[l]  #2}}
\newcommand{\Zrun}[2]{\xymatrix{#1 \ar@{.>}[r] & #2}}
\newcommand{\ZrunUog}[3]{\xymatrix{#1 \ar@{.>}[r]|-{#3} & #2}}
\newcommand{\ZrunUogOp}[3]{\xymatrix{#1 & \ar@{.>}[l]|-{#3} #2}}
\newcommand{\D}{\{1 \ldots d\}}
\newcommand{\size}[1]{\text{size}(#1)}
\newcommand{\problem}[3]{
\begin{quote}
\ \ {\sc #1:} \vspace{1mm}

\begin{tabular}{ll}
\emph{Input:} & #2\\
\emph{Question:} & #3
\end{tabular}
\end{quote}
}
\newcommand{\fold}[1]{\text{fold}(#1)}
\newcommand{\effect}[1]{\text{shift}(#1)}
\newcommand{\klejogol}[3]{#1 \oplus_{#3} #2}
\newcommand{\klej}[2]{\klejogol{#1}{#2}{}}
\newcommand{\klejt}[3]{\klej{#1}{\klej{#2}{#3}}}
\newcommand{\bad}[1]{{\color[rgb]{1,0,0} #1}}
\newcommand{\pr}[1]{{\color[rgb]{0,0,1}#1}}
\newcommand{\bd}{\bad{\bar {\small \Delta}}}
\newcommand{\bdp}{\bad{\bar {\small \Delta'}}}
\newcommand{\vass}{{\cal V}}
\newcommand{\gvass}{{\cal G}}
\title{VASS reachability  in three steps}
\author{S. Lasota} 
\begin{document}

\maketitle

This note is a product of digestion of the famous proof of decidability of the reachability problem for vector addition systems with states (VASS), as first established by Mayr~\cite{Mayr81,Mayr84} and then simplified by Kosaraju~\cite{Kosaraju82}.
The note is neither intended to be rigorously formal nor complete; it is rather intended to be an intuitive 
but precise enough description of main concepts exploited in the proof.
Very roughly, the overall idea is to provide a decidable condition $\Theta$ on a VASS such that $\Theta$ implies reachability
and $\neg \Theta$ implies that the size of VASS can be reduced.
With these two properties, the size of input can be incrementally reduced until the problem becomes trivial.
We proceed in three steps: we first formulate the condition $\Theta$ for plain VASS, then adapt it to more general
\emph{VASS with unconstrained coordinates}, and finally to \emph{generalized VASS} of~\cite{Kosaraju82}.

\section{The reachability problem}

A \emph{vector addition system with states} (VASS) consists of a finite set of control states $Q$ and a finite set $E \subseteq Q \times \Z^d \times Q$ of arcs.
The number $d\geq 1$ is the \emph{dimension} of a VASS.
A pseudo-configuration is a pair $(q, v) \in \states \times \Z^d$; it is a \emph{configuration} if $v\in\N^d$.
An arc $e = (q, z, q')$ induces a step
\[
\step{(q, v)}{e}{(q', v+z)}
\]
between pseudo-configurations.
We write $\Zrun{q, v}{q', v'}$ if there is a sequence of steps from $(q, v)$ to $(q', v')$; every such sequence we call \emph{pseudo-run}. 
We reserve this term for a sequence of steps, as well as for an (inducing) sequence of arcs.
If all vectors appearing in a pseudo-run belong to $\N^d$ we call it \emph{run}, and write $\Nrun{q, v}{q', v'}$;
this implies in particular that $(q, v)$ and $(q', v')$ themselves are configurations.  
The aim of this note is to describe an algorithm for
\problem{VASS reachability problem}
{a VASS $(d, Q, E)$ and two configurations $(q, v), (q', v')$.}
{does $\Nrun{q, v}{q', v'}$ hold?}

\paragraph*{Sufficient condition}

As a warm-up, we prove a sufficient condition for reachability. For a VASS and two configurations 
$(q, v)$, $(q', v')$, define the following two conditions:
\begin{enumerate}
\item[$\Theta_1$:] For every $m \geq 1$, $\Zrun{q, v}{q', v'}$ by a pseudo-run that uses every arc at least $m$ times.
\item[$\Theta_2$:] There are vectors $\Delta, \Delta' \geq \vec 1$
such that
\begin{align*}
& \ \Nrun{q, v}{q, v +\Delta} \\
& \NrunOp{q', v'}{q', v' + \Delta'}
\end{align*}
\end{enumerate}
\begin{proposition}
$\Theta_1 \land \Theta_2$ implies $\Nrun{q, v}{q', v'}$.
\end{proposition}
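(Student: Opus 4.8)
The plan is to use the two cycles supplied by $\Theta_2$ as a reservoir that turns a sufficiently ``thick'' pseudo-run supplied by $\Theta_1$ into a genuine run. First I would unpack $\Theta_2$: fix a cycle $\sigma$ witnessing a run from $(q,v)$ to $(q,v+\Delta)$ and a cycle $\tau$ witnessing a run from $(q',v'+\Delta')$ to $(q',v')$. Each iterate of $\sigma$ is the previous run shifted by the nonnegative vector $\Delta$, and symmetrically for $\tau$, so non-negativity is preserved along iterates and for every $a,b\geq 0$ we obtain runs from $(q,v)$ to $(q,v+a\Delta)$ and from $(q',v'+b\Delta')$ to $(q',v')$. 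Consequently it suffices to exhibit, for some $a,b\geq 0$, a run from $(q,v+a\Delta)$ to $(q',v'+b\Delta')$.

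Now I would bring in $\Theta_1$: fix a pseudo-run $\rho$ from $(q,v)$ to $(q',v')$ that uses every arc at least $m$ times, with $m$ to be chosen at the very end. The run I am after will be a mere resequencing of the arcs of $\rho$. Any resequencing that still forms a walk out of $(q,v)$ automatically ends in the configuration $(q',v')$ and has total effect $v'-v$, since it uses the same multiset of arcs; so the only thing to arrange is that it stays in $\N^{d}$. More precisely, as soon as $m\geq a\,|\sigma|+b\,|\tau|$ the arc multiset of $\rho$ contains $a$ copies of the arc multiset of $\sigma$ together with $b$ copies of that of $\tau$, so I can pull these to the front and to the back and sequence $\rho$ as $\sigma^{a}\cdot\rho'\cdot\tau^{b}$, where $\rho'$ uses the remaining arcs. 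That leftover multiset is still a flow from $q$ to $q'$ with connected support, hence is the arc multiset of some pseudo-run, and that pseudo-run goes from $(q,v+a\Delta)$ to $(q',v'+b\Delta')$ precisely because the effects of $\sigma^{a}$, $\rho'$ and $\tau^{b}$ add up to $v'-v$.

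It then remains to choose the order of $\rho'$ --- and the values of $a$ and $b$ --- so that $\sigma^{a}\cdot\rho'\cdot\tau^{b}$ stays non-negative throughout. The flanking blocks are runs already ($\sigma^{a}$ starts at the configuration $(q,v)$, and $\tau^{b}$ ends at the configuration $(q',v')$), so everything reduces to keeping $\rho'$ in $\N^{d}$ on its way from $(q,v+a\Delta)\geq a\cdot\vec 1$ to $(q',v'+b\Delta')\geq b\cdot\vec 1$, where I use $\Delta,\Delta'\geq\vec 1$. This is the heart of the matter, and it is where thickness is spent: the leftover multiset of $\rho'$ still contains at least $m-a\,|\sigma|-b\,|\tau|$ copies of every arc, in particular many copies of the ascending cycle $\sigma$ at $q$; so whenever $\rho'$ would drift downwards one routes, over a bounded-length piece of the graph (which is connected since every arc is used), to $q$, fires a copy of $\sigma$, and routes back, interleaving descending portions with such recovery round-trips rather than ever running a descending cycle to completion. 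One checks that this keeps every intermediate vector within a constant $C$ --- depending only on $d$, the number of states and the largest arc entry, and crucially not on $m$ --- of the straight segment between the vectors $v+a\Delta$ and $v'+b\Delta'$, whose coordinates are everywhere at least $\min(a,b)$. Taking $a=b=C$ then makes the two reserves swamp this excursion, and finally taking $m$ large enough makes the whole construction legitimate.

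The step I expect to be the real obstacle is exactly this last serialisation claim: that a thick flow with bounded net effect can be played out as a pseudo-run whose values hug the straight segment between its fixed endpoints up to an error bounded in terms of the VASS alone --- so that a single choice of $a$ and $b$ can be fixed before $m$ is sent to infinity. Everything else --- iterating the pumping cycles, the Eulerian/flow bookkeeping that produces $\rho'$, and the additive computation of effects that pins down the endpoints --- is routine.
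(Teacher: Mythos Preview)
Your overall shape --- pump up with the $\Theta_2$ cycle at $q$, pump down with the $\Theta_2$ cycle at $q'$, bridge the middle --- matches the paper's. But your bridge is genuinely different, and the argument you sketch for it does not go through.

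The claim you yourself flag as the obstacle is false as stated. Take a one-dimensional VASS with states $q,p$ and arcs $e_1=(q,+1,q)$, $e_2=(q,0,p)$, $e_3=(p,-1,p)$, $e_4=(p,0,q)$; set $q'=q$, $v=v'=0$, $\sigma=e_1$, $\tau=e_2e_3e_4$ (so $\Delta=\Delta'=1$ and $\Theta_2$ holds). For each $m$ the pseudo-run $e_1^{m^2}(e_2 e_3^{m} e_4)^{m}$ witnesses $\Theta_1$, and after peeling off $C$ copies of $\sigma$ and $\tau$ the leftover $\rho'$ has $\approx m^2$ copies of $e_1,e_3$ against only $\approx m$ copies of $e_2,e_4$. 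Any walk realising $\rho'$ makes only $\approx m$ visits to $p$ in which to spend $\approx m^2$ copies of $e_3$, so some visit drops the counter by at least $\approx m$; the walk's range is $\Theta(m)$ while both endpoints sit at the fixed value $C$, so the deviation from the straight segment cannot be bounded independently of $m$. (In this particular example you can still keep $\rho'$ non-negative by firing all $e_1$'s first and soaring to height $m^2$, but that is a weaker statement than the one your argument relies on, and you have not argued it in general.) The underlying circularity is that the constant $C$ is supposed to be fixed before $m$, yet your scheme for controlling $\rho'$ appeals to the internal structure of a multiset that depends on $m$.

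The paper sidesteps this by never reserialising anything $m$-dependent. Using the Euler-type subtraction Lemma~\ref{l:odejmowanie} it builds, once and for all, a \emph{fixed} pseudo-run $\nu$ from $q'$ to $q'$ with shift exactly $\Delta'-\Delta$ (this is Claim~\ref{c:Deltas}). The run is then assembled as
\[
(q,v)\ \to\ (q,v+m\Delta)\ \to\ (q',v'+m\Delta)\ \to\ (q',v'+m\Delta')\ \to\ (q',v'),
\]
using respectively $m$ copies of the up-cycle, one \emph{fixed} $\Theta_1$ pseudo-run shifted by $m\Delta$, $m$ copies of $\nu$, and $m$ copies of the down-cycle. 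Because the two inner pieces are fixed, their minimal intermediate values are fixed constants, while all three intermediate waypoints grow on every coordinate with $m$; a single sufficiently large $m$ then turns every pseudo-run into a run, with no rearrangement needed. The ingredient you are missing is precisely this conversion cycle $\nu$: once you have it, the delicate serialisation problem disappears and is replaced by an elementary monotonicity argument.
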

\begin{proof}
We will use the following claim, to be proved later:
\begin{claim} \label{c:Deltas}
$\Zrun{q', \Delta}{q', \Delta'}$.
\end{claim}
\noindent
Here is a shape of a required run from $(q, v)$ to $(q', v')$:
\[
\xymatrix{
&& q', v' + m \Delta \ar@/^1pc/@{.>}[rrd]^{\txt{\ \ \ \ \ \ \ \ by Claim~\ref{c:Deltas}, $m$ times}} \\
q, v + m \Delta  \ar@/^1pc/@{.>}[rru]^{\txt{by $\Theta_1$}} &&&& q', v' + m \Delta' \ar[d]_{\txt{by $\Theta_2$, $m$ times}} \\
q, v \ar[u]_{\txt{by $\Theta_2$, $m$ times}} &&&& q', v'
}
\]
Observe that when $m$ increases, the three intermediate points also increase on all coordinates. 
Therefore, for a sufficiently large $m$, the two pseudo-runs become runs. 
\end{proof}

\paragraph*{Proof of Claim~\ref{c:Deltas}.}

Consider the underlying graph of the VASS, whose vertices are control states
and edges are arcs (note that there may be parallel edges).
Every pseudo-run induces a path in the graph.
For a pseudo-run from $(p, w)$ to $(p', w')$, we shortly speak of a pseudo-run from $p$ to $p'$ when vectors $w, w'$ are irrelevant.
Let $E$ denote the set of arcs. 
By the \emph{folding} of a pseudorun $\pi$ we mean the vector $\fold{\pi} \in \N^E$ that says how many times every arc is used by $\pi$.
The following lemma, roughly speaking, allows us to subtract one pseudo-run from another
(it is proved using Eulerian equalities):
\begin{lemma} \label{l:odejmowanie}
Let $\tau, \rho$ be two pseudo-runs from $p$ to $p'$ such that\footnote{
We write $\vec m_C$ for a constant vector in $\Z^C$ having $m$ on all coordinates.
We prefer to omit the subscript $C$ and write simply $\vec m$ whenever this does not lead to confusion.
}
$$\fold{\tau} - \fold{\rho} \geq \vec 1_E.$$
For every non-isolated control state $p''$ there is a pseudo-run $\sigma$ from $p''$ to $p''$ with
$\fold{\sigma} = \fold{\tau} - \fold{\rho}$.
\end{lemma}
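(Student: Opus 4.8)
The plan is to read the vector $D := \fold{\tau} - \fold{\rho} \in \N^E$ as the arc‑multiplicity vector of a finite directed multigraph $G_D$ on vertex set $Q$ (with $D(e)$ parallel copies of each arc $e \in E$), and to obtain $\sigma$ as an \emph{Eulerian circuit} of $G_D$ anchored at $p''$, i.e.\ a closed walk that traverses each arc $e$ exactly $D(e)$ times. Thus the lemma reduces to the classical Euler criterion for directed multigraphs: if $G_D$ is \emph{balanced} (in‑degree $=$ out‑degree at every vertex) and all of its non‑isolated vertices lie in one weakly connected component, then $G_D$ has an Eulerian circuit, and that circuit may be taken to start at any prescribed non‑isolated vertex.

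First I would check \textbf{balance}. For \emph{any} pseudo-run $\pi$ from $p$ to $p'$, a direct count of occurrences of a control state $q$ along $\pi$ (each such occurrence is the tail of the next arc, unless it is the final vertex, and the head of the previous arc, unless it is the first vertex) yields the \emph{Eulerian equalities}: at every $q$,
\[
\sum_{e \text{ entering } q} \fold{\pi}(e) \ - \ \sum_{e \text{ leaving } q} \fold{\pi}(e) \ = \ [\,q = p'\,] - [\,q = p\,].
\]
Applying this to $\tau$ and to $\rho$ and subtracting, the right‑hand sides cancel, so $D = \fold{\tau} - \fold{\rho}$ satisfies the same equalities with $0$ on the right at every $q$; equivalently, $G_D$ is balanced.

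Next, \textbf{connectivity}. Since $\fold{\rho} \geq \vec 0$ and $\fold{\tau} - \fold{\rho} \geq \vec 1_E$, we get $\fold{\tau} \geq \vec 1_E$: the single walk $\tau$ already traverses every arc of $E$. Hence $\tau$ visits every non‑isolated control state, so all non‑isolated states of $(Q,E)$ lie in one weakly connected component; and because $D \geq \vec 1_E$, the non‑isolated vertices of $G_D$ coincide with the non‑isolated states of $(Q,E)$. In particular the given $p''$ belongs to this component.

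Finally I would invoke Euler's theorem. With balance and connectivity in hand, build the Eulerian circuit by the usual greedy–splicing argument: starting from $p''$, extend a walk arc by arc; balance forces it eventually to return to $p''$, yielding a closed walk $\sigma_0$; if $\sigma_0$ omits some arc then, by connectivity, some omitted arc meets a vertex $r$ lying on $\sigma_0$, the multigraph of the omitted arcs is again balanced, so recursively there is a closed walk at $r$, and it can be spliced into $\sigma_0$ at $r$; iterate until every arc is used. Reading the resulting closed walk as a sequence of arcs gives a pseudo-run $\sigma$ from $p''$ to $p''$ with $\fold{\sigma} = D = \fold{\tau} - \fold{\rho}$ — and there is nothing to verify about its vectors, since $\sigma$ is only required to be a pseudo-run. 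The one spot that calls for care is precisely this last step: checking that each residual multigraph stays balanced after a closed sub-walk is removed, and that connectivity always supplies a splice point; modulo that standard bookkeeping, everything is routine.
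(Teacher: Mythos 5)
Your proof is correct and follows exactly the route the paper indicates (the paper only remarks that the lemma ``is proved using Eulerian equalities'' without giving details): you verify balance of the difference multigraph via the Eulerian equalities for $\tau$ and $\rho$, get connectivity of the non-isolated states from $\fold{\tau}\geq\vec 1_E$, and conclude with the directed Euler-circuit theorem anchored at $p''$. No gaps; the splicing bookkeeping you flag is indeed standard.
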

By $\effect{\pi} \in \Z^d$ we mean the effect of a pseudo-run $\pi$, namely the difference between its final vector and its initial one. 
Note that the shift of a pseudo-run is completely determined by its folding.
To prove Claim~\ref{c:Deltas}, we need to show that there is a pseudo-run from $q'$ to $q'$ with shift $\Delta' - \Delta$.

Basing on condition $\Theta_1$, we know that we can pick  two pseudo-runs $\tau$, $\rho$ from $(q, v)$ to $(q', v')$ with arbitrarily large
difference  $\fold{\tau} - \fold{\rho}$.
Fix (due to $\Theta_2$) a run $\pi$ from $(q, v)$ to $(q, v + \Delta)$, and a run $\pi'$ from $(q', v' + \Delta')$ to $(q', v')$.
Then fix two  pseudoruns $\tau$, $\rho$ from $(q, v)$ to $(q', v')$ such that
\[
\fold{\tau} - \fold{\rho} - \fold{\pi} - \fold{\pi'} \geq \vec 1.
\]
Finally, apply Lemma~\ref{l:odejmowanie} three times in a sequence, to deduce that there is a pseudo-run $\nu$ from $q'$ to $q'$ satisfying
\[
\fold{\nu} = \fold{\tau} - \fold{\rho} - \fold{\pi} - \fold{\pi'}.
\]
Indeed,  $\effect{\nu} = \effect{\tau} - \effect{\rho} - \effect{\pi} - \effect{\pi'} = \Delta' - \Delta$ as required.

\section{Partially unconstrained reachability problem} \label{sec:VASSunc}

We now slightly generalize the reachability problem, and a sufficient condition.
In the next section we will provide a yet further generalization that will be finally suitable for designing a decision procedure for reachability.

We will need a bit of concise notation. 
From now on we identify $\Z^d$ and $\Z^{\D}$; for instance, the set of configurations is $Q\times \N^{\D}$. 
For two disjoint subsets $C, B \in \D$ and two vectors $v \in \Z^C$ and $w \in \Z^{B}$, we write
$
\klejogol{v}{w}{}
$ 
for the unique vector in $\Z^{C\cup B}$ obtained by glueing together $v$ and $w$.
Formally:
\[
(\klejogol{v}{w}{})(i) = \begin{cases}
v(i) & \text{ if } i \in C \\
w(i) & \text{ if } i \in B.
\end{cases}
\]
%
From now on, by convention $\bar C$ will always  denote the complement $\D - C$. 

The generalization of the reachability problem amounts to considering only some subset $C \subseteq D$ of coordinates
as \emph{constrained}, while the remaining coordinates (i.e., those in $\bar C$) are considered \emph{unconstrained}. The input and output configuration
is specified only on constrained coordinates, and left unspecified on the remaining ones. Nevertheless, a run we ask for should remain
nonnegative on all coordinates. Here is a precise formulation: 

\problem{Partially unconstrained VASS reachability problem}
{a VASS $(d, Q, E)$, two subsets $C, C' \subseteq \D$, \\ 
&  $(q, v) \in \states \times \N^C$ and $(q', v') \in \states \times \N^{C'}$.}
{does $\Nrun{q, \klej{v}{\bar v}}{q', \klej{v'}{\bar v'}}$ hold \\
& for some vectors $\bar v \in \N^{\bar C}$, $\bar v' \in \N^{\bar C'}$?}
%
We remark that we do not assume $C = C'$. The setting of the previous section is the special case $C = C' = \D$.

\paragraph*{Sufficient condition}

Here is a generalization of $\Theta_1$ and $\Theta_2$ to the more general setting.
We write $\ZrunUog{q, v}{q', v'}{C}$, for $C\subseteq \{1 \ldots d\}$,
to say that there is a pseudo-run from $(q, v)$ to $(q', v')$ whose all vectors are non-negative on coordinates from $C$ (such pseudo-runs we call \emph{$C$-runs}).
We write shortly $\Npar{m}$ for $\N - \{0 \ldots m-1\}$.
\begin{enumerate}
\item[$\Theta_1$:] For every $m \geq 1$, there are some vectors $\bar v \in (\Npar{m})^{\bar C}$, $\bar v' \in (\Npar{m})^{\bar C'}$ such that
 $\Zrun{q, \klej{v}{\bar v}}{q', \klej{v'}{\bar v'}}$ by a pseudo-run that traverses every arc at least $m$ times.
\item[$\Theta_2$:] There are vectors $\Delta \in (\Pos)^C$, $\Delta' \in (\Pos)^{C'}$, $\bd \in \Z^{\bar C}$ and $\bdp 
\in \Z^{\bar C'}$ 
such that
\begin{align*}
\pi: \quad & \ \ZrunUog{q, \klej{v}{\vec 0}}{q, \klej{(v +\Delta)}{\bd}}{C} \\
\pi': \quad & \ZrunUogOp{q', \klej{v'}{\vec 0}}{q', \klej{(v' + \Delta')}{\bdp}}{C'}
\end{align*}
\end{enumerate}
\begin{proposition}
$\Theta_1 \land \Theta_2$ implies $\Nrun{q, \klej{v}{\bar v}}{q', \klej{v'}{\bar v'}}$ for some vectors $\bar v \in \N^{\bar C}$, $\bar v' \in \N^{\bar C'}$.
\end{proposition}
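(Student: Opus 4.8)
The plan is to imitate the proof of the first Proposition, upgrading every ingredient so as to keep track of the unconstrained coordinates. Write $\hat\Delta := \klej{\Delta}{\bd} \in \Z^d$ for the effect of the $C$-run $\pi$ and $\hat\Delta' := \klej{\Delta'}{\bdp} \in \Z^d$, so that the $C'$-run $\pi'$ has effect $-\hat\Delta'$. The first step is an analogue of Claim~\ref{c:Deltas}: a pseudo-run $\nu$ from $q'$ to $q'$ whose effect equals $\Delta' - \Delta$ on $C \cap C'$ --- exactly as in the plain case --- and, on the remaining coordinates, a definite vector that we will be able to absorb into later choices. To build it, fix via $\Theta_2$ the runs $\pi$ and $\pi'$, and use $\Theta_1$ to obtain two pseudo-runs, each from a configuration $(q, \klej{v}{\bar v})$ to a configuration $(q', \klej{v'}{\bar v'})$: a witness $\rho$ for parameter $1$, with unconstrained vectors $\bar v_1, \bar v'_1$, which we keep fixed; and a witness $\tau$ for a parameter $M$ chosen so large that $\fold{\tau} - \fold{\rho} - \fold{\pi} - \fold{\pi'} \geq \vec 1_E$ and, moreover, the unconstrained vectors $\bar v_M, \bar v'_M$ of $\tau$ dominate $\bar v_1, \bar v'_1$ coordinatewise (possible since the former are $\geq \vec M$). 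As in the proof of Claim~\ref{c:Deltas}, three successive applications of Lemma~\ref{l:odejmowanie} produce a pseudo-run $\nu$ from $q'$ to $q'$ with $\fold{\nu} = \fold{\tau} - \fold{\rho} - \fold{\pi} - \fold{\pi'}$, hence (the shift being determined by the folding) with $\effect{\nu} = \effect{\tau} - \effect{\rho} - \hat\Delta + \hat\Delta'$; on $C \cap C'$ the term $\effect{\tau} - \effect{\rho}$ vanishes because both witnesses have the same constrained endpoints there, and we recover $\effect{\nu} = \Delta' - \Delta$ on $C \cap C'$.

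Second, I would assemble the staircase in the shape of the diagram of the first proof, but with all five nodes now also specifying the unconstrained coordinates. Choose constant buffer vectors $\bar w \in \N^{\bar C}$ and $\bar w' \in \N^{\bar C'}$, to be fixed below, each of order $m$, and concatenate: (a) $\pi$ iterated $m$ times, running from $(q, \klej{v}{\bar w})$ to $(q, \klej{(v + m\Delta)}{(\bar w + m\bd)})$ --- a genuine run, not merely a $C$-run, because $\bar w$ dominates both the order-$m$ vector $m\bd$ and the bounded excursions of $\pi$ off $C$; (b) the fixed witness $\rho$, shifted by the constant vector that makes it start at the configuration just reached --- automatically a pseudo-run, and a genuine run for $m$ large, since that shift is coordinatewise of order $m$ while $\rho$ has fixed length and hence bounded excursions; (c) $\nu$ iterated $m$ times, each copy further shifted so that the copies chain up, which by the computation of $\effect{\nu}$ brings the constrained coordinates to $v' + m\Delta'$ on $C'$ and which sits over configurations all of whose coordinates are of order $m$, hence is again a genuine run for $m$ large; (d) $\pi'$ iterated $m$ times and then shifted up on $\bar C'$ by a constant of order $m$ --- this constant being exactly $\bar w'$ --- which turns it into a genuine run ending at $(q', \klej{v'}{\bar w'})$.

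The only equation that is not free to us, and the main obstacle, is the matching of segments (b) and (c) on the constrained coordinates; it is caused entirely by the possibility $C \ne C'$. A coordinate $i \in C' \setminus C$ is unconstrained for $\rho$, so segment (b) does not steer it, yet it is constrained at the target, so segment (c) must drive it to exactly $v'(i) + m\Delta'(i)$; and, on $\bar C'$, the value carried into segment (d) must be the one from which the shifted-up iterate of $\pi'$ starts. Writing these requirements out forces the value of $\bar w$ on $C' \setminus C$, and symmetrically that of $\bar w'$ on $C \setminus C'$, to a specific vector of the shape ``a fixed vector plus $m$ times the restriction of $\bar v_M - \bar v_1$'' (resp.\ of $\bar v'_M - \bar v'_1$); this is a legal choice --- nonnegative and growing with $m$ --- precisely because $\tau$ was chosen with $\bar v_M \geq \bar v_1$ and $\bar v'_M \geq \bar v'_1$, and the telescoping identity for $\effect{\nu}$ derived above is exactly what makes the matching equation hold. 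On $\bar C \cap \bar C'$ the two buffers turn out to be linked by a single linear relation, still leaving room to keep both of them large. Once all this is in place the conclusion follows as in the plain case: every pseudo-run used is one of finitely many fixed objects, so all excursions are bounded by constants independent of $m$, while every intermediate configuration has all coordinates of order $m$; hence for $m$ large enough each of the four segments is a genuine run, and their concatenation is the required run from $(q, \klej{v}{\bar w})$ to $(q', \klej{v'}{\bar w'})$ --- that is, the Proposition holds with $\bar v := \bar w$ and $\bar v' := \bar w'$.
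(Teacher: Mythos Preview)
Your approach is essentially the paper's: your $\rho,\tau$ are its $\pi_0,\pi_1$, your differences $\bar v_M-\bar v_1$ and $\bar v'_M-\bar v'_1$ are its $\bar\delta,\bar\delta'$, and the buffers $\bar w,\bar w'$ that you solve for from the matching constraints turn out to be exactly the paper's explicit choice $\bar v+m\bar\delta$ and $\bar v'+m\bar\delta'$ for the unconstrained parts of the start and end. One small omission worth noting: among the largeness conditions on $M$ you should also require $(\bar v_M-\bar v_1)+\bd\geq\vec 1_{\bar C}$ and $(\bar v'_M-\bar v'_1)+\bdp\geq\vec 1_{\bar C'}$ (the paper isolates this explicitly), since on $C'\setminus C$ the value of $\bar w$ is forced by your matching equation and without this extra condition need not dominate $m\,\bd$, so segment~(a) could fail to stay nonnegative there.
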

\begin{proof}
The general idea of the proof is similar to the previous section, namely pumping up by a multiplicity of $\Delta$ 
(and de-pumping down by the same multiplicity of $\Delta'$) in order to make
some pseudorun  $\Zrun{q, \klej{v}{\bar v}}{q', \klej{v'}{\bar v'}}$ into a run.
The new difficulty is that pumping involves $\klej{\Delta}{\bd}$, with $\bd$ possibly negative on some coordinates (and likewise for de-pumping).
This issue is solved by starting from $\klej{v}{\bar v}$, for a sufficiently large $\bar v \geq \vec m$. 

%

We will need a couple of facts.
The first one easily follows from $\Theta_1$:
\begin{claim} \label{c:linear}
There are vectors $\bar v \in \N^{\bar C}$,
$\bar v' \in \N^{\bar C'}$ 
and a pseudo-run
\begin{align*} 
\pi_0: \quad & \Zrun{q, \klej{v}{\bar v}}{q', \klej{v'}{\bar v'}} 
\end{align*}
such that for every $m>0$ there is a pseudo-run
%
\begin{align*} 
\pi_1: \quad & \Zrun{q, \klej{v}{(\bar v + \bar \delta)}}{q', \klej{v'}{(\bar v' + \bar \delta')}}
\end{align*}
with $\fold{\pi_1} - \fold{\pi_0} \geq \vec m_E$ and $\bar \delta \geq \vec m_{\bar C}$ and $\bar \delta' \geq \vec m_{\bar C'}$.
\end{claim}
\noindent
In other words, $\pi_0$ and $\pi_1$ can be chosen to make the three vectors $\fold{\pi_1} - \fold{\pi_0}$, \ $\bar \delta$ and $\bar \delta'$
arbitrarily large on all coordinates. Therefore we conclude:
\begin{claim} \label{c:Deltastwoaux}
The pseudo-runs $\pi_0$ and $\pi_1$ can be chosen so that:
\begin{itemize}
\item[(a)\,] \ $\bar \delta + \bd \geq \vec 1_{\bar C}$, \  
$\bar \delta' + \bdp \geq \vec 1_{\bar C'}$ 
\vspace{-1mm}
\item[(b)\,] pseudo-runs in $\Theta_2$, lifted by $\klej{\vec 0}{(\bar v + \bar \delta)}$ and 
$\klej{\vec 0}{(\bar v' + \bar \delta')}$, respectively, become runs:
\begin{align*}
\pi: \quad & \ \Nrun{q, \klej{v}{(\bar v + \bar \delta)}}{q, \klej{(v + \Delta)}{(\bar v + \bar \delta + \bd)}} \\
\pi': \quad & \NrunOp{q', \klej{v'}{(\bar v' + \bar \delta')}}{q', \klej{(v' + \Delta')}{(\bar v' + \bar \delta' + \bdp)}}
\end{align*}
\item[(c)\,] $\fold{\pi_1} - \fold{\pi_0} - \fold{\pi} - \fold{\pi'} \geq \vec 1_E$.
\end{itemize}
\end{claim}
\noindent
Using Claim~\ref{c:Deltastwoaux}(a)--(b) together with the monotonicity of VASSes,
we deduce that for an arbitrary $m>0$, the runs $\pi$ and $\pi'$ can be repeated $m$ times when
lifted further by $\klej{\vec 0}{m \bar \delta}$ and $\klej{\vec 0}{m \bar \delta'}$, respectively:
\begin{claim} \label{c:end}
For 
every $m \geq 1$ it holds
\begin{align*}
& \ \ \Nrun{q, \klej{v}{(\bar v + m \bar \delta)}}{q, \klej{(v + m \Delta)}{(\bar v + m (\bar \delta + \bd))}} \\
& \NrunOp{q', \klej{v'}{(\bar v' + m \bar \delta')}}{q', \klej{(v' + m \Delta')}{(\bar v' + m (\bar \delta' + \bdp))}}
\end{align*}
\end{claim}
\noindent
The last claim generalizes Claim~\ref{c:Deltas} from the previous section.
\begin{claim} \label{c:Deltastwo}
$\Zrun{q', \klej{\Delta}{(\bar \delta + \bd)}}{q', \klej{\Delta'}{(\bar \delta' + \bdp)}}$.
\end{claim}
\begin{proof} 
We have $\effect{\pi_1} - \effect{\pi_0} = (\klej{\vec 0}{\bar \delta'}) - (\klej{\vec 0}{\bar \delta})$ and
$\effect{\pi} = \klej{\Delta}{\bd}$ and $\effect{\pi'} = \klej{(-\Delta')}{(-\bdp)}$.
By Claim~\ref{c:Deltastwoaux}(c) we can apply Lemma~\ref{l:odejmowanie} 
three times, to deduce that there is a pseudo-run $\nu$ from $q'$ to $q'$ satisfying
\[
\fold{\nu} = \fold{\pi_1} - \fold{\pi_0} - \fold{\pi} - \fold{\pi'}.
\]
We check:  $\effect{\nu} = (\effect{\pi_1} - \effect{\pi_0}) - \effect{\pi} - \effect{\pi'} = 
\klej{\Delta'}{(\bar \delta' + \bdp)} -
\klej{\Delta}{(\bar \delta + \bd)}$, as required.
\end{proof}
We are now prepared to draw a shape of a required run (for readability, the primed items are depicted in blue):
\[ \hspace{-1cm}
\xymatrix{
&  
      \pr{q'}, \klej{(\pr{v'} + m \Delta)}{(\pr{\bar v'} + m (\bar \delta {+} \bd))}
      \ar@/^2pc/@{.>}[rd]|-{\txt{by Claim~\ref{c:Deltastwo}\\ ($m$ times)}} \\
 \klej{q, (v + m \Delta)}{(\bar v + m (\bar \delta {+} \bd))}  \ar@/^2pc/@{.>}[ru]|-{\txt{pseudo-run\\ $\pi_0$ of Claim~\ref{c:linear}}} &&
\klej{q', (\pr{v'} + m \pr{\Delta'})}{(\pr{\bar v'} + m (\pr{\bar \delta'} {+} \pr{\bdp}))}  \ar@{->}[d]_{\txt{by Claim~\ref{c:end} \ \ \ \ }} \\
\klej{q, v}{(\bar v + m \bar \delta)} \ar@{->}[u]_{\txt{\ \ \ by Claim~\ref{c:end}}} && \ \klej{\pr{q'}, \pr{v'}}{(\pr{\bar v'} + m \pr{\bar \delta'})}
}
\]
When $m$ increases, each of the three intermediate points increases on all coordinates. 
As a conclusion, for sufficiently large $m$, all the pseudo-runs become runs.
\end{proof}

\begin{remark} \rm
For the next section it is important to note that we have actually shown 
$\Nrun{q, \klej{v}{(\bar v + m \bar \delta)}}{q', \klej{v'}{(\bar v' + m \bar \delta')}}$
for all sufficiently large $m$.
\end{remark}


\section{Generalized reachability problem}

We do now the last generalization in order to complete the decidability proof. 
By a \emph{component} we mean a VASS $(d, Q, E)$ together with the following data:
\begin{itemize}
\item initial and final state $q, q' \in Q$;
\vspace{-1.5mm}
\item subset of rigid coordinates $R \subseteq \D$; we assume that all arcs in $E$ have $0$ on all coordinates in $R$
and hence, intuitively speaking, $d - |R|$ may be considered as the actual dimension of the component;
\vspace{-1.5mm}
\item rigid vector $r \in \N^{R}$;
\vspace{-1.5mm}
\item two partitions $\D - R = C \cup U = C' \cup U'$ of non-rigid coordinates into initial constrained coordinates $C$ and initial unconstrained coordinates $U$, and into final constrained coordinates $C'$ and final unconstrained coordinates $U'$;
\vspace{-1.5mm}
\item initial and final vector $v \in \N^{C}$, $v' \in \N^{C'}$.
\end{itemize}
Note that component does not essentially differ from the input of the partially unconstrained reachability problem from the previous section.
The \emph{generalized VASS} (GVASS) $\gvass$ consists of $l\geq 1$ components
\[
\vass_i = (d, E_i, Q_i, q_i, q'_i, R_i, r_i, C_i, U_i, C'_i, U'_i, v_i, v'_i)
\]
of the same dimension $d$, with pairwise disjoint state sets $Q_i$, and
$l{-}1$ arcs of the form $e_i = (q'_i, z_i, q_{i+1})$, where $z_i \in \Z^d$, for $i \in \{1 \ldots i-1\}$.
%
%
We will be interested in pseudo-runs $\pi$ from $q_1$ to $q'_l$ of the following form:
\begin{align} \label{eq:runPU}
\begin{aligned}
& \Zrun{q_1, \klejt{r_1}{v_1}{u_1}}{q'_1, \klejt{r_1}{v'_1}{u'_1}  \step{}{e_1}{}} \\
& \Zrun{q_2, \klejt{r_2}{v_2}{u_2}}{q'_2, \klejt{r_2}{v'_2}{u'_2}  \step{}{e_2}{}} \ \ldots \\
 \ldots  \step{}{e_{l-1}}{}  & \Zrun{\ q_l, \klejt{r_l}{v_l}{u_l}}{\ q'_l, \klejt{r_l}{v'_l}{u'_l}}
\end{aligned}
\end{align}
for some $u_1\in\N^{U_1}, u'_1\in \N^{U'_1}, \ldots, u_l \in \N^{U_l}, u'_l\in \N^{U'_l}$.
Each such pseudo-run $\pi$ passes through every arc $e_i$ exactly once, 
and thus splits into $l$ pseudo-runs 
$
\pi = \pi_1 e_1 \pi_2 e_2 \ldots e_{l-1} \pi_l,
$ 
each $\pi_i$ being a pseudo-run in $\vass_i$. When each of $\pi_i$ is a run, 
we call $\pi$ a run; 
if such a run exists we say that $\gvass$ \emph{admits reachability}.

\problem{Generalized VASS reachability problem}
{a GVASS $\gvass$.}
{does $\gvass$ admit reachability?}
%
The setting of the previous section is the special case of one component without rigid coordinates: $l = 1$, $R_1 = \emptyset$.

\subsubsection*{Sufficient condition}

The condition $\Theta_2$ below is essentially the conjunction of conditions $\Theta_2$ of the previous section for each of the VASSes 
$\vass_i$ separately; the only difference is taking rigid coordinates into account.
On the other hand, the condition $\Theta_1$ below speaks jointly about all the VASSes $\vass_i$. 

\begin{enumerate}
\item[$\Theta_1$:] For every $m \geq 1$, there is a pseudo-run  from $q_1$ to $q'_l$ of the form~\eqref{eq:runPU} 
that traverses every arc in every $E_i$ at least $m$ times, for some 
$u_1 \in (\Npar{m})^{U_1}, u'_1 \in (\Npar{m})^{U'_1}, \ldots, u_l \in (\Npar{m})^{U_l}, u'_l \in (\Npar{m})^{U'_l}$.
%
\item[$\Theta_2$:] For every $i \in \{1\ldots l\}$ there are vectors $\Delta \in (\Pos)^{C_i}$, $\Delta' \in (\Pos)^{C'_i}$, 
$\bd \in \Z^{U_i}$ and $\bdp \in \Z^{U'_i}$ 
such that
\begin{align}
& \ZrunUog{q_i, \klejt{r_i}{v_i}{\vec 0}}{q_i, \klejt{r_i}{(v_i + \Delta)}{\bd}}{C_i} \label{eq:theta2} \\
& \ZrunUogOp{q'_i, \klejt{r_i}{v'_i}{\vec 0}}{q'_i, \klejt{r_i}{(v'_i + \Delta')}{\bdp}}{C'_i} \label{eq:theta2bis}
\end{align}
\end{enumerate}
Observe that $\Theta_1$ implies $C'_i = C_{i+1}$ for $i \in \{1\ldots l{-}1\}$.
The sufficient condition for reachability is proved similarly as in the previous section:
\begin{proposition}
If $\gvass$ satisfies $\Theta_1 \land \Theta_2$ then $\gvass$ admits reachability.
\end{proposition}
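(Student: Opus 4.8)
The plan is to reduce this to the partially unconstrained case treated in the previous section, applied component by component, and then glue the resulting runs together along the connecting arcs $e_1, \dots, e_{l-1}$. The key observation is that $\Theta_2$ of the GVASS is, by design, exactly the conjunction over $i$ of the conditions $\Theta_2$ from Section~\ref{sec:VASSunc} for each $\vass_i$ (with the rigid coordinates carried along untouched, since all arcs are $0$ there), while $\Theta_1$ is a single statement about one long pseudo-run of the form~\eqref{eq:runPU} that decomposes as $\pi = \pi_1 e_1 \pi_2 \cdots e_{l-1} \pi_l$. So the first step is to extract, from the GVASS-level $\Theta_1$, a per-component $\Theta_1$: for each $m$, restricting the guaranteed pseudo-run of~\eqref{eq:runPU} to the block $\pi_i$ gives a pseudo-run inside $\vass_i$ from $(q_i, \klej{v_i}{u_i})$ to $(q'_i, \klej{v'_i}{u'_i})$ that traverses every arc of $E_i$ at least $m$ times, with $u_i, u'_i \geq \vec m$ on the unconstrained coordinates. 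That is precisely $\Theta_1$ of Section~\ref{sec:VASSunc} for $\vass_i$ (ignoring the rigid coordinates, which are constant and hence never an obstruction to nonnegativity once $r_i \in \N^{R_i}$).

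Second, I would invoke the Proposition of Section~\ref{sec:VASSunc} — more precisely the strengthened conclusion recorded in the Remark after it — once for each $i$. This yields, for each component and for all sufficiently large $m$, a genuine run
\[
\Nrun{q_i, \klejt{r_i}{v_i}{(\bar u_i + m \bar \delta_i)}}{q'_i, \klejt{r_i}{v'_i}{(\bar u'_i + m \bar \delta'_i)}}
\]
for appropriate $\bar u_i \in \N^{U_i}$, $\bar \delta_i \geq \vec 1_{U_i}$ (and primed analogues on $U'_i$). The point of using the Remark's form rather than the bare Proposition is that it tells us exactly what the unconstrained coordinates of the endpoints are, namely $\bar u_i + m\bar\delta_i$ and $\bar u'_i + m\bar\delta'_i$, both of which grow without bound as $m$ grows.

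Third comes the gluing. The connecting arc $e_i = (q'_i, z_i, q_{i+1})$ takes $(q'_i, \klejt{r_i}{v'_i}{(\bar u'_i + m\bar\delta'_i)})$ to $(q_{i+1}, \klejt{r_{i+1}}{?}{?})$; since $\Theta_1$ forces $C'_i = C_{i+1}$ (as noted in the excerpt) and $R_i = R_{i+1}$, $v'_i = v_{i+1}$, the step is compatible on the constrained/rigid coordinates, and on the unconstrained coordinates we just need the result to land in $\N$ — but we have a free choice of $m$, and the incoming unconstrained value $\bar u'_i + m\bar\delta'_i$ grows with $m$, so for $m$ large enough the finitely many shifts $z_1, \dots, z_{l-1}$ all keep every coordinate nonnegative. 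The only subtlety is that the unconstrained coordinates $U_{i+1}$ of the $(i{+}1)$st component need not be the same set as $U'_i$, and the value delivered by $e_i$ on $U_{i+1}$ need not match the value $\bar u_{i+1} + m\bar\delta_{i+1}$ that our chosen run for $\vass_{i+1}$ starts from. This is handled by monotonicity of VASS runs: a run that works from a given configuration also works from any pointwise-larger one, and the endpoints of the $\vass_{i+1}$ run are then larger too, propagating down the chain. One should fix the multiplier $m$ once, large enough to simultaneously satisfy: all $l$ invocations of the Section~\ref{sec:VASSunc} Proposition, nonnegativity across all $l-1$ connecting steps, and the monotonicity slack needed to absorb mismatches between $e_i$'s output and $\pi_{i+1}$'s required input.

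The main obstacle is the bookkeeping in this last step: making precise, in a single picture, that one value of $m$ works for all components at once and that the unconstrained coordinates can be threaded consistently through the connecting arcs despite $U_i \neq U'_{i-1}$ in general. Conceptually it is routine — everything is monotone and every relevant quantity is an affine function of $m$ with positive leading coefficient on the coordinates that matter — but writing the chained run of the form~\eqref{eq:runPU} and verifying it is nonnegative at every pseudo-configuration requires care. I expect the proof to be presented, as in the previous two sections, via a diagram exhibiting the shape of the run, with the phrase ``for sufficiently large $m$ all pseudo-runs become runs'' doing the final work.
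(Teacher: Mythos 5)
Your overall strategy --- reduce to the previous section componentwise and glue at the arcs $e_i$ --- is the intended one, but the gluing step as you describe it has a genuine gap, and it is precisely the point the paper flags. If you invoke the Proposition of Section~\ref{sec:VASSunc} independently for each $\vass_i$, each invocation carries its own base vectors and growth directions coming from its own application of Claim~\ref{c:linear}: component $i$ delivers a final unconstrained value $\bar u'_i + m\bar\delta'_i$, while component $i{+}1$ demands an initial unconstrained value $\bar u_{i+1} + m\bar\delta_{i+1}$, and these are chosen by two unrelated applications of that claim. Monotonicity only helps when the delivered value dominates the demanded one, i.e.\ when $\bar\delta'_i \geq \bar\delta_{i+1}$ on $U_{i+1}$; if $\bar\delta'_i(j) < \bar\delta_{i+1}(j)$ for some $j$, increasing $m$ makes the mismatch \emph{worse}. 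So ``affine in $m$ with positive leading coefficient on the coordinates that matter'' is exactly the unjustified step: the leading coefficient of the relevant difference is $\bar\delta'_i - \bar\delta_{i+1}$, which has no reason to be nonnegative. Moreover, even when domination holds, absorbing the surplus by monotonicity shifts the whole of $\pi_{i+1}$ (including its endpoint) upward on the surplus coordinates; since a coordinate of $U_{i+1}$ may well lie in $C'_{i+1}$ (the partitions $C_{i+1}\cup U_{i+1}$ and $C'_{i+1}\cup U'_{i+1}$ are independent), this can break the requirement that the boundary configurations equal $v'_{i+1}$ exactly on the final constrained coordinates, so the glued pseudo-run is no longer of the form~\eqref{eq:runPU} and does not witness that $\gvass$ admits reachability.

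The paper's remedy is not to patch the junctions but to make them match exactly: carry out the analogue of Claim~\ref{c:linear} \emph{once, for all components simultaneously}, extracting from the GVASS-level $\Theta_1$ a single pair of global pseudo-runs $\pi_0, \pi_1$ of the form~\eqref{eq:runPU}, and only then split them at the arcs $e_i$ into per-component pairs. Since both global pseudo-runs traverse each $e_i$ exactly once, the base values and the increments $\bar\delta$ at the end of component $i$ and at the start of component $i{+}1$ agree up to the fixed shift $z_i$; the junction configurations then coincide for every $m$, and the remaining claims and the final diagram can indeed be handled per component, exactly as you propose (this is also where the Remark at the end of Section~\ref{sec:VASSunc} is needed, since it pins down the endpoints of each component's run as $\klej{v_i}{(\bar v_i + m\bar\delta_i)}$ and $\klej{v'_i}{(\bar v'_i + m\bar\delta'_i)}$). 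So the fix is local: replace your independent invocations of Claim~\ref{c:linear} by one simultaneous invocation, after which no monotonicity bookkeeping at the junctions is required.
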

\noindent
Indeed, in Claim~\ref{c:linear} one should consider all components simultaneously and recall the remark at the end of
Section~\ref{sec:VASSunc};
for the other claims and the construction of a run, one can consider the components separately.

Furthermore, the sufficient condition can be effectively tested:
\begin{proposition}
Both $\Theta_1$ and $\Theta_2$ are decidable.
\end{proposition}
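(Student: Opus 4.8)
The plan is to handle the two conditions separately, reducing each to a classical decidable problem. For $\Theta_2$, first observe that it is a finite conjunction, over $i \in \{1\ldots l\}$, of the two requirements~\eqref{eq:theta2} and~\eqref{eq:theta2bis}, so it suffices to decide each. In~\eqref{eq:theta2} the rigid coordinates $R_i$ stay frozen at $r_i$ (every arc carries $0$ on them), and the unconstrained coordinates $U_i$ play no role at all: a $C_i$-run need not remain nonnegative on them, and the target value $\bd$ on them is existentially quantified without further constraint. Hence, letting $\vass_i^{C}$ denote the VASS obtained from $\vass_i$ by projecting every arc vector onto the coordinates in $C$, I would show that~\eqref{eq:theta2} holds for some $\Delta, \bd$ if and only if in $\vass_i^{C_i}$ the configuration $(q_i, v_i)$ reaches some configuration at $q_i$ that is coordinate-wise at least $v_i + \vec 1$ — that is, $(q_i, v_i)$ \emph{covers} $(q_i, v_i + \vec 1)$ in $\vass_i^{C_i}$; the ``if'' direction lifts a witnessing run of $\vass_i^{C_i}$ to a $C_i$-run of $\vass_i$ by replaying the same arcs, and ``only if'' projects. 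Symmetrically,~\eqref{eq:theta2bis} is a coverability instance in the reverse of $\vass_i^{C'_i}$ (reverse each arc and negate its vector): the backward $C'_i$-run from a value $\geq v'_i + \vec 1$ down to $v'_i$ reverses into a forward run covering $v'_i + \vec 1$. Since coverability for VASS is decidable by methods that do not presuppose reachability (the Karp--Miller coverability tree, the backward algorithm for well-structured transition systems, Rackoff's bound), $\Theta_2$ is decidable.

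For $\Theta_1$ I would exploit that the pseudo-runs it quantifies over live in $\Z^d$ with no nonnegativity constraint, so a pseudo-run of the shape~\eqref{eq:runPU} is described — up to the slack permitted on unconstrained coordinates — by the foldings $f_i = \fold{\pi_i} \in \N^{E_i}$ of its $l$ pieces together with the vectors $u_i \in \N^{U_i}$ and $u'_i \in \N^{U'_i}$. I would translate the requirements of $\Theta_1$ into constraints on these data: (i)~\emph{realizability} of each $f_i$ as the folding of a pseudo-run from $q_i$ to $q'_i$, which — using that $f_i$ has full support $E_i$ by~(iii) below, together with the Eulerian reasoning underlying Lemma~\ref{l:odejmowanie} — is equivalent to the Kirchhoff balance equations (for every state $p$, the number of traversals of arcs leaving $p$ minus that of arcs entering $p$ equals $[p{=}q_i] - [p{=}q'_i]$) together with a fixed graph-connectivity condition (the multigraph $E_i$ augmented with one virtual arc $q'_i \to q_i$ is weakly connected on the states incident to $E_i \cup \{q_i, q'_i\}$); this connectivity condition is a finite property of $\vass_i$, independent of the solution and of $m$, decidable outright, and already makes $\Theta_1$ false if it fails for some $i$; (ii)~\emph{matching}: $\effect{\pi_i}$ is a fixed linear function of $f_i$, so requiring that the initial and final vectors of $\pi_i$ equal $\klejt{r_i}{v_i}{u_i}$ and $\klejt{r_i}{v'_i}{u'_i}$ and that these be linked across the connecting arcs $e_i$ yields a system of linear equations over $\N$; (iii)~\emph{largeness}: $f_i \geq \vec m$, $u_i \geq \vec m$, $u'_i \geq \vec m$.

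Granting the (decidable, $m$-independent) connectivity conditions from~(i), $\Theta_1$ reduces to: for every $m \geq 1$ the linear system $A x = b \wedge x \geq \vec m$ has a solution $x \in \N^n$, where $x = (f_i, u_i, u'_i)_i$ collects all the variables and $A, b$ assemble the equations of~(i)--(ii). This is a sentence of Presburger arithmetic, hence decidable; concretely it holds if and only if $A x = b$ is solvable over $\N$ and the homogeneous system $A x = \vec 0$ admits a strictly positive solution in $\N^n$. Indeed, given a particular solution $p$ and a strictly positive homogeneous solution $h$, the vectors $p + m h$ witness the claim; conversely, solutions with $x \geq \vec m$ for arbitrarily large $m$ force every coordinate to be unbounded over the solution set, so the Hilbert basis of $\{A x = \vec 0,\ x \geq \vec 0\}$ must contain, for each coordinate, a generator positive there, and the sum of these generators is the desired $h$. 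Both tests reduce to (integer) linear programming and are effective.

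I expect the main obstacle to be the realizability step of~$\Theta_1$: getting the equivalence between ``a pseudo-run of shape~\eqref{eq:runPU} that traverses every arc at least $m$ times'' and a purely linear-algebraic and graph-theoretic condition exactly right — in particular, verifying that the Kirchhoff equations together with the $m$-independent weak-connectivity condition, and nothing else, capture realizability even for arcs lying on no directed cycle, and that this connectivity part can be isolated so that it does not spoil the Presburger-definability on which the ``for every $m$'' quantifier relies. For $\Theta_2$ the only delicate point is the reduction to coverability, which becomes routine once one notices that the unconstrained and rigid coordinates cannot obstruct it.
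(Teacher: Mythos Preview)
Your proposal is correct and follows essentially the same approach as the paper: reduce $\Theta_2$ to coverability and reduce $\Theta_1$ to analysing the (hybrid-)linear solution set of a system of linear Diophantine equations encoding pseudo-runs of shape~\eqref{eq:runPU}. You are in fact more careful than the paper's terse sketch about the Euler-path realizability step --- isolating the $m$-independent connectivity side-condition so that what remains is genuinely Presburger --- which the paper leaves implicit.
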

\begin{proof}
Pseudo-runs~\eqref{eq:runPU} can be encoded as the set of nonegative solutions of a system of linear Diophantine equations.
Then condition $\Theta_1$ can be decided by inspecting the (hybrid-linear) set of solutions (cf.~Claim~\ref{c:hybrid} below).
Checking condition $\Theta_2$ reduces to the coverability problem.
\end{proof}

\subsubsection*{Refinement}
Let  $\size{\vass_i} = (d - |R_i|, |E_i|, |U_i|+|U'_i|) \in \N^3$. Thus the size of $\vass_i$ is a triple consisting of: 
the number of non-rigid coordinates, the number of arcs, the number of unconstrained coordinates. 
For a GVASS $\gvass$, we define $\size{\gvass}$ as the multiset of sizes of all components $\vass_i$.

Order triples in $\N^3$ lexicographically.
For two finite multisets  of triples $m$ and $m'$, we say that $m'$ \emph{refines} $m$ if $m'$ is obtained by removing one triple from $m$, 
and replacing it by a finite number of lexicographically strictly smaller triples. 

\begin{claim}
The refinement relation is well-founded.
\end{claim}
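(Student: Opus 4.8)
The plan is to show that the refinement relation on finite multisets of triples from $\N^3$ is well-founded by reducing it to the standard fact that the multiset extension of a well-founded order is well-founded (Dershowitz--Manna). First I would recall that the lexicographic order on $\N^3$ is well-founded: this follows because $(\N,<)$ is well-founded and a finite lexicographic product of well-founded orders is again well-founded (a straightforward induction on the number of factors, or a direct minimal-counterexample argument picking the least first coordinate, then the least second, then the least third). Let me write $<_{\mathrm{lex}}$ for this order on $\N^3$.

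Next I would observe that the refinement relation is contained in (indeed, is exactly) the strict multiset order $<_{\mathrm{mul}}$ induced by $<_{\mathrm{lex}}$: replacing one triple $t$ by finitely many triples each strictly $<_{\mathrm{lex}}$-smaller than $t$ is precisely one step of the Dershowitz--Manna multiset ordering. Hence a strictly decreasing chain in the refinement relation is a strictly decreasing chain in $<_{\mathrm{mul}}$, so it suffices to invoke the theorem that the multiset extension of a well-founded order is well-founded. I would cite this as a standard result; if a self-contained argument is preferred, one can argue by contradiction: given an infinite refinement chain $m_0, m_1, m_2, \ldots$, build a finitely branching tree whose root has the elements of $m_0$ as children and where each triple $t$ removed at some step gets the triples that replace it as its children; an infinite chain forces, by König's lemma, an infinite descending path in $<_{\mathrm{lex}}$, contradicting well-foundedness of $<_{\mathrm{lex}}$.

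The main obstacle is essentially bookkeeping rather than conceptual: one must make precise that the ``replacement'' steps compose correctly so that the tree is well-defined (each triple is replaced at most once along the chain, since after replacement it is removed from the multiset) and that the tree is finitely branching (each refinement step introduces only finitely many new triples). Once this is set up, König's lemma does the rest. I expect no serious difficulty, so the proof will be short: state well-foundedness of $<_{\mathrm{lex}}$ on $\N^3$, identify refinement with one step of the multiset order, and conclude by the Dershowitz--Manna theorem (or the König's-lemma argument above).
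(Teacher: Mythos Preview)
Your proposal is correct: well-foundedness of the lexicographic order on $\N^3$ plus the Dershowitz--Manna theorem (or the K\"onig's-lemma argument you sketch) is exactly the standard way to establish this. The paper itself states the claim without proof, treating it as folklore, so your write-up actually supplies more detail than the paper does; there is nothing to compare against beyond noting that the multiset-ordering argument is the intended one.
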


We shortly say that $\gvass'$ refines $\gvass$ when $\size{\gvass'}$ refines $\size{\gvass}$.
We can assume wlog.~that every component of $\gvass$ is strongly connected:
\begin{claim}
If the underlying graph\footnote{We ignore here one inessential detail: this is a \emph{multigraph}, i.e., parallel edges are allowed.} of some component of $\gvass$ is not strongly-connected then one can compute 
$\gvass_1 \ldots \gvass_n$ refining $\gvass$ such that $\gvass$ admits reachability if, and only if some of $\gvass_1 \ldots \gvass_n$ does.
\end{claim}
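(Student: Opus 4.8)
The plan is to reduce a non-strongly-connected component $\vass_i$ to a GVASS in which that component is replaced by a chain of smaller components, one per strongly connected component (SCC) of the underlying multigraph of $\vass_i$, linked by the arcs that connect distinct SCCs. First I would observe that in any run of $\vass_i$ of the form prescribed in~\eqref{eq:runPU}, the induced path visits the SCCs in a fixed order along a path of the DAG of SCCs; but a priori there may be several such DAG-paths from the SCC of $q_i$ to the SCC of $q'_i$, so the reduction must enumerate them. For each such DAG-path $S_1, S_2, \ldots, S_k$ (there are finitely many, each without repeated vertices), and for each way of choosing one connecting arc between consecutive SCCs along it, I would build one refined GVASS $\gvass_j$: split $\vass_i$ into $k$ components, the $t$-th living on the sub-VASS induced by $S_t$, with the chosen inter-SCC arcs playing the role of the new connecting arcs $e$, inserted into the GVASS's chain in place of the single slot formerly occupied by $\vass_i$.

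The subtle bookkeeping is the data attached to the new components. The new $t$-th component inherits the rigid coordinates $R_i$ and rigid vector $r_i$ of $\vass_i$; its initial state is the entry point into $S_t$ (for $t=1$ this is $q_i$, otherwise the head of the incoming chosen arc) and its final state is the exit point (for $t=k$ this is $q'_i$, otherwise the tail of the outgoing chosen arc). Crucially, all non-rigid coordinates other than the original constrained ones must be declared \emph{unconstrained} at every newly created interface: we have no control over the exact intermediate vectors, so at the entry of $S_t$ for $t>1$ we set the constrained set to $\emptyset$ (all of $\D - R_i$ becomes $U$), and similarly at the exit of $S_t$ for $t<k$, while keeping $C_i, v_i$ at the very start and $C'_i, v'_i$ at the very end. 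One then checks that a run of $\gvass$ restricted to the $\vass_i$-slot, together with the induced choice of DAG-path and connecting arcs, is exactly a run of the corresponding $\gvass_j$ in the new slot, and conversely; gluing with the runs of the other (unchanged) components on both sides gives the equivalence ``$\gvass$ admits reachability iff some $\gvass_j$ does.''

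It remains to check that each $\gvass_j$ refines $\gvass$ in the sense defined above. Here we remove the single triple $\size{\vass_i} = (d-|R_i|,\,|E_i|,\,|U_i|+|U'_i|)$ and add $k$ triples, one per SCC $S_t$. Since $R_i$ is unchanged, the first coordinate of each new triple is still $d - |R_i|$, so we must win on the second coordinate: each new component uses only the arcs internal to its SCC, and the arcs of $\vass_i$ strictly between SCCs (of which there is at least one, since the graph is not strongly connected) belong to none of them, hence $\sum_t |E_{S_t}| < |E_i|$, and in particular $|E_{S_t}| < |E_i|$ for every $t$ — wait, that last inequality needs the arcs-between-SCCs to be nonempty and each $S_t$ to miss at least one arc of $E_i$; since $k \geq 2$ and there is a connecting arc between $S_1$ and $S_2$ that lies in no $E_{S_t}$, indeed $|E_{S_t}| \leq |E_i| - 1$ for all $t$. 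Thus every new triple is lexicographically strictly smaller than $\size{\vass_i}$, so each $\gvass_j$ refines $\gvass$. The main obstacle I anticipate is purely combinatorial rather than deep: correctly setting up the constrained/unconstrained partitions at the internal interfaces so that the run-correspondence is exact, and making sure the finitely-many enumeration over DAG-paths and connecting-arc choices is faithfully matched by runs on both sides (in particular that no DAG-path with a repeated SCC can occur, which follows because the SCC-DAG is acyclic).
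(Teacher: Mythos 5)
Your construction is exactly the paper's (the paper dispatches this claim with the single sentence ``it suffices to do the decomposition into strongly connected graphs''), and your elaboration of it --- enumerating DAG-paths of SCCs from the SCC of $q_i$ to that of $q'_i$ together with a choice of one connecting arc per consecutive pair, making all non-rigid coordinates unconstrained at the freshly created interfaces, and keeping $R_i, r_i$ throughout --- is the right way to flesh it out. The one place your argument is not airtight is the size bound: you justify $|E_{S_t}| \leq |E_i| - 1$ by asserting $k \geq 2$, but $k$ is the length of the chosen DAG-path, and nothing forces $k \geq 2$ when the graph fails to be strongly connected: $q_i$ and $q'_i$ may lie in the \emph{same} SCC while the non-strong-connectedness is witnessed elsewhere (dead-end SCCs reachable from it, or states not reachable at all). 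In that case $k = 1$ and your stated reason evaporates; the conclusion still holds whenever some arc of $E_i$ is not internal to that SCC (any such arc is dropped, so the arc count still strictly decreases), but in the fully degenerate case where the SCC of $q_i, q'_i$ already contains every arc and the remaining states are isolated, the produced component has the same size triple and is \emph{not} a refinement. The clean fix is to first discard states (and then arcs) not lying on any path from $q_i$ to $q'_i$ --- after which either the component is already strongly connected and nothing need be done, or every DAG-path has $k \geq 2$ and your inequality goes through. This is a corner case the paper glosses over as well, but since the entire lexicographic-termination argument rests on the strict decrease in the second coordinate, it is worth stating.
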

\noindent
Indeed, it suffices to do the decomposition into strongly connected graphs.

For \emph{trivial} $\gvass$, whose size contains only zero triples $(0, 0, 0)$, the reachability problems trivializes. 
Otherwise, either $\gvass$ satisfies $\Theta_1 \land \Theta_2$ and thus admits reachability, or $\gvass$ can be refined:
\begin{proposition}
If a non-trivial $\gvass$ violates $\Theta_1$ then one can compute 
$\gvass_1 \ldots \gvass_n$ refining $\gvass$ such that $\gvass$ admits reachability if, and only if some of $\gvass_1 \ldots \gvass_n$ does.
\end{proposition}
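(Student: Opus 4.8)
The plan is to describe pseudo-runs of the form~\eqref{eq:runPU} by a linear description, read off from $\neg\Theta_1$ a uniform computable bound, and then branch on which quantity is forced to stay small. Call the following numbers the \emph{parameters} of a pseudo-run $\pi=\pi_1 e_1\pi_2\ldots e_{l-1}\pi_l$ of the form~\eqref{eq:runPU}: for each $i$, the number of traversals of each arc of $E_i$ by $\pi_i$, and each coordinate $u_i(j)$ with $j\in U_i$ and $u'_i(j)$ with $j\in U'_i$. There are finitely many of them, and $\Theta_1$ says exactly that for every $m$ there is a pseudo-run of the form~\eqref{eq:runPU} all of whose parameters are $\geq m$. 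By Claim~\ref{c:hybrid} the foldings of pseudo-runs of the form~\eqref{eq:runPU} form an effectively hybrid-linear set, and the unconstrained entry/exit values are linear images of the foldings together with the connecting arcs; hence the set of parameter tuples is effectively hybrid-linear. From $\neg\Theta_1$ we can therefore compute a bound $B\geq 1$ such that \emph{no} pseudo-run of the form~\eqref{eq:runPU} has all its parameters $\geq B$. (Equivalently, since for a fixed $m$ the existence of such a pseudo-run is decidable, we search $m=1,2,\ldots$; by $\neg\Theta_1$ the search halts.) In particular every run witnessing reachability of $\gvass$, being a pseudo-run of the form~\eqref{eq:runPU}, has some parameter in $\{0,\ldots,B-1\}$.

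Now branch on which parameter is small and on its exact value $c<B$; this yields finitely many computable cases. If the small parameter is ``the arc $e=(p,z,p'')\in E_i$ is traversed exactly $k$ times'' ($k<B$), form $\gvass'$ from $\gvass$ by replacing the component $\vass_i$ with a chain of $k+1$ fresh components $\vass_i^{(0)},\ldots,\vass_i^{(k)}$ linked by $k$ connecting arcs, each equal to $e$: every $\vass_i^{(t)}$ keeps the arc set $E_i\setminus\{e\}$, the rigid set $R_i$ and the rigid vector $r_i$; $\vass_i^{(0)}$ runs from $q_i$ to $p$ with initial data $(C_i,v_i)$, $\vass_i^{(k)}$ runs from $p''$ to $q'_i$ with final data $(C'_i,v'_i)$, the intermediate ones run from $p''$ to $p$, and all non-rigid coordinates are declared unconstrained at every newly created interface (for $k=0$ this is just $\vass_i$ with $e$ deleted). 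Since $e\in E_i$, each $\vass_i^{(t)}$ has strictly fewer arcs and the same number $d-|R_i|$ of non-rigid coordinates as $\vass_i$, so its size triple is lexicographically below $\size{\vass_i}$ no matter how many coordinates became unconstrained, and $\size{\gvass'}$ refines $\size{\gvass}$. If instead the small parameter is ``the unconstrained coordinate $j\in U_i$ enters $\vass_i$ with value $c$'' (symmetrically for $j\in U'_i$ and the exit value), form $\gvass'$ by moving $j$ from $U_i$ to $C_i$ and extending $v_i$ by $v_i(j):=c$; this preserves $d-|R_i|$ and $|E_i|$ and lowers $|U_i|+|U'_i|$ by one, so again $\size{\gvass'}$ refines $\size{\gvass}$. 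Finally, whenever a component just produced fails to be strongly connected (which can happen only after an arc deletion), apply the strong-connectivity decomposition claim to it; all triples involved still lie strictly below $\size{\vass_i}$, so the overall output refines $\gvass$.

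Correctness is then straightforward. For ``$\Leftarrow$'': a run of a $\gvass'$ of the first kind restricts, over $\vass_i^{(0)},\ldots,\vass_i^{(k)}$ and the $k$ interleaved copies of $e$, to a run of $\vass_i$ that uses $e$ exactly $k$ times (each copy of $e$ carries precisely its effect $z$, and coordinatewise nonnegativity is preserved), which splices back into a run of $\gvass$; a run of a $\gvass'$ of the second kind is literally a run of $\gvass$ with $u_i(j)=c$ (resp.\ $u'_i(j)=c$). For ``$\Rightarrow$'': a run of $\gvass$ is a pseudo-run of the form~\eqref{eq:runPU}, hence has some parameter equal to a value $c<B$; cutting it at the $c$ occurrences of the witnessing arc, or keeping it verbatim, exhibits a run of the corresponding $\gvass'$. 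Hence $\gvass$ admits reachability iff one of the finitely many constructed GVASSes does.

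I expect the real work to sit in the first paragraph: producing the effective hybrid-linear description of the parameter tuples — the connectivity requirement on foldings is exactly what pushes this beyond the solution set of a linear Diophantine system — and extracting the computable bound $B$. The remainder is bookkeeping; the only point requiring care is that the arc count is the \emph{second} entry of the size triple, so that deleting an arc forces a lexicographic decrease irrespective of what happens to the number of unconstrained coordinates at the freshly created interfaces.
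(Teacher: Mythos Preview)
Your proof is correct and takes essentially the same route as the paper: encode the parameters of pseudo-runs~\eqref{eq:runPU} as a hybrid-linear set, read off a bound from $\neg\Theta_1$, and branch on a bounded arc (splitting the component along it) or a bounded unconstrained coordinate (making it constrained). The only cosmetic differences are that the paper assumes strong connectivity \emph{upfront} (so the Euler conditions make the parameter set exactly the nonnegative solution set of a linear Diophantine system, which dissolves the worry in your last paragraph) and then uses the decomposition $L = B + P^*$ to pin down one \emph{single} coordinate that is bounded across \emph{all} pseudo-runs, branching only on its value, whereas you branch over all parameters and all values below your global $B$.
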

\begin{proof}
Wlog.~assume that the underlying graphs of all components $\vass_i$ are strongly connected.
Let 
$
k = \sum_{i = 1}^l |U_i| + E_i + |U'_i|.
$
Consider the set $L \subseteq \N^k$ of all vectors
\begin{align}  \label{e:fold} 
(u_1, f_1, u'_1, \ldots, u_l, f_l, u'_l) \in \N^k
\end{align}
such that there is a pseudo-run $\pi = \pi_1 e_1 \pi_2 e_2 \ldots e_{l-1} \pi_l$ 
of the form~\eqref{eq:runPU} with $\fold{\pi_1} = f_1 \geq \vec 1, \ldots, \fold{\pi_l} = f_l \geq \vec 1$.
The set $L$ is the set of nonnegative solutions of a system of linear Diophantine equations, and thus we have:
\begin{claim} \label{c:hybrid}
One can compute finite sets $B, P \subseteq \N^k$ such that $L = B + P^*$.
\end{claim}

Suppose $\gvass$ does not satisfy $\Theta_1$. Hence for some coordinate  in $\{1 \dots k\}$, 
all vectors in $P$ have zero on that coordinate.
This \emph{zero coordinate} corresponds either to some arc, or to some unconstrained (input or output) coordinate.

Suppose the first case holds, and let $e\in E_i$ be the arc corresponding to the zero coordinate.
By Claim~\ref{c:hybrid} one can compute a number $c$ such that every pseudo-run~\eqref{eq:runPU} passes through $e$ at most $c$ times.
We refine $\gvass$ by $c+1$ GVASSes $\gvass_0 \ldots \gvass_c$, each $\gvass_m$ obtained by replacing $\vass_i$ by a sequence of $m+1$ copies of $\vass_i - \{e\}$, 
i.e.~of $\vass_i$ without the arc $e$.
The rigid coordinates and rigid vector of all copies are as in $\vass_i$.
The initial constrained coordinates of the first copy are $C_i$, the final constrained coordinates of the last copy are $C'_i$, and
the remaining initial or final constrained coordinates of all copies are empty sets. 
The initial vector of the first copy is $v_1$ and the final vector of the last copy is $v'_l$; all other initial and final vectors are empty ones. 

Now suppose the second case holds, i.e., the zero coordinate corresponds to some, say,
initial unconstrained coordinate $j \in U_i$ (final unconstrained coordinate is treated symmetrically).
By Claim~\ref{c:hybrid} one can compute a number $c$ such that the value on coordinate $j$ in $u_i$ 
(cf.~\eqref{e:fold})
is at most $c$, for every pseudo-run $\pi$.
We refine $\gvass$ by constraining the coordinate $j$ to some value in $\{0 \ldots c\}$.
We define $c+1$ refining GVASSes $\gvass_0 \ldots \gvass_c$, where $\gvass_m$ differs from $\gvass$ only
by making the coordinate $j$ in $\vass_i$ an initial constrained coordinate, with value $m$.
\end{proof}

\begin{proposition}
If a non-trivial $\gvass$ 
violates $\Theta_2$ then one can compute 
$\gvass_1 \ldots \gvass_n$ refining $\gvass$ such that $\gvass$ admits reachability if, and only if some of $\gvass_1 \ldots \gvass_n$ does.
\end{proposition}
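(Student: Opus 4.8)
The plan is to imitate the proof of the previous proposition, but now to lower the \emph{first} component of the size of some component of $\gvass$ — the number of non-rigid coordinates — by turning a suitable constrained coordinate into a rigid one. First I would unpack what it means for $\gvass$ to violate $\Theta_2$: there is a component $\vass_i$ for which no choice of $\Delta, \Delta', \bd, \bdp$ satisfies both~\eqref{eq:theta2} and~\eqref{eq:theta2bis}. Since $\Delta, \bd$ occur only in~\eqref{eq:theta2} and $\Delta', \bdp$ only in~\eqref{eq:theta2bis}, one of the two equations is already unsatisfiable on its own, and by passing if necessary to the reversed VASS (reverse every arc, negate every effect, and swap $q_i \leftrightarrow q'_i$, $v_i \leftrightarrow v'_i$, $C_i \leftrightarrow C'_i$) we may assume it is~\eqref{eq:theta2}; as in the previous proposition we also assume wlog.\ that all components are strongly connected. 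Observe that~\eqref{eq:theta2} is just a coverability question: letting $W_i$ be the genuine $|C_i|$-dimensional VASS obtained from $\vass_i$ by keeping only the coordinates of $C_i$ — legitimate because a $C_i$-run ignores the coordinates of $U_i$ and never moves the rigid ones — equation~\eqref{eq:theta2} holds iff $(q_i, v_i + \vec 1)$ is coverable from $(q_i, v_i)$ in $W_i$.

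The crux is the following effective witness of non-coverability.

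\begin{claim}
If $\vass_i$ is strongly connected and~\eqref{eq:theta2} fails, then one can compute a coordinate $j \in C_i$ and a bound $c \in \N$ such that every run of $\vass_i$ starting in $q_i$ with value $v_i$ on the coordinates of $C_i$ keeps coordinate $j$ below $c$ throughout.
\end{claim}

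To prove the Claim I would split on whether the underlying graph has a closed walk through $q_i$ with effect $\geq \vec 1$ on all coordinates of $C_i$. If it does not, then an Eulerian/Farkas argument in the spirit of Lemma~\ref{l:odejmowanie} produces a state potential $y\colon Q_i\to\Q$ and a nonzero weight $u\in\Q^{C_i}_{\geq 0}$ such that $y_p + \sum_{k\in C_i} u_k\, w(k)$ never increases along a step; since this quantity starts at a fixed value and is bounded below, it bounds $u_j\, w(j)$, hence $w(j)$, for every coordinate $j$ with $u_j>0$ and every configuration $(p,w)$ reachable from $(q_i,v_i)$ by a $C_i$-run — in particular along every run of $\vass_i$. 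If such a closed walk does exist, I would instead run the Karp–Miller procedure on $W_i$ from $(q_i,v_i)$: since~\eqref{eq:theta2} fails, no node sitting at $q_i$ carries $\omega$ on all of $C_i$; using strong connectivity one shows that the coordinates carrying $\omega$ at \emph{some} node can all be pumped \emph{simultaneously} at $q_i$, so failure of~\eqref{eq:theta2} forces some $j\in C_i$ to carry $\omega$ at no node at all, whence $c$ may be taken to be the (computable) maximum of the finite values of coordinate $j$ over the tree. This last point — ruling out the scenario where every coordinate is individually unbounded but none is so uniformly — is, I expect, the main obstacle, and is precisely where strong connectivity, together with a simultaneous-pumping argument for Karp–Miller accelerations, is needed.

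Granting the Claim, the refinement is pure bookkeeping. Replace $\vass_i$ by the component $\vass_i'$ that simulates $\vass_i$ with the value of coordinate $j$ (never exceeding $c$) stored in the control state: the state set becomes $Q_i\times\{0,\ldots,c\}$; an arc $(p,z,p')$ spawns, for each $a$ with $0\le a+z(j)\le c$, an arc from $(p,a)$ to $(p',a+z(j))$ with effect $z$ off coordinate $j$ and $0$ on $j$; coordinate $j$ becomes rigid with rigid value $0$ and leaves $C_i$ (and leaves $C'_i$ or $U'_i$); and the entry and exit arcs $e_{i-1},e_i$ of $\vass_i$ inside $\gvass$ are adjusted to erase the stored value on entry and restore it on exit. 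Let $\gvass_1$ be $\gvass$ with $\vass_i$ replaced by $\vass_i'$, so that $n=1$. Runs of $\gvass_1$ are in bijection with runs of $\gvass$, and none is lost precisely because coordinate $j$ stays $\le c$ in $\gvass$; hence $\gvass$ admits reachability iff $\gvass_1$ does. Finally $\size{\vass_i'}$ is lexicographically smaller than $\size{\vass_i}$, since its first component — the number of non-rigid coordinates — has strictly dropped, so $\gvass_1$ refines $\gvass$, as required.
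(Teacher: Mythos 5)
Your reduction of~\eqref{eq:theta2} to a coverability question, the symmetric treatment of~\eqref{eq:theta2bis} via reversal, and the cross-product-and-rigidify construction are all in the spirit of the paper. But there is a genuine gap at the step you yourself flag as ``the main obstacle'': your Claim asserts that one can compute a \emph{single} coordinate $j \in C_i$ that is bounded by $c$ in \emph{every} run. This is strictly stronger than what the paper establishes and uses. The paper's claim is only that for every $C_i$-run $\pi$ there is \emph{some} coordinate $j \in C_i$ (depending on $\pi$) bounded by $c$ in $\pi$; accordingly, the paper's refinement branches over \emph{all} $j \in C_i$ (one GVASS for each $j \in C_i \cap C'_i$ and $c{+}1$ GVASSes for each $j \in C_i \cap U'_i$), so that whichever coordinate happens to be bounded in a given run, that run survives in at least one member of the family. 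Your version needs to rule out the scenario in which every coordinate of $C_i$ is individually unbounded yet $\Theta_2$ still fails because they cannot all be made large \emph{simultaneously at $q_i$}; your justification for this --- ``the coordinates carrying $\omega$ at some node can all be pumped simultaneously at $q_i$'' --- is exactly the statement that individually unbounded coordinates are simultaneously unbounded, which does not follow from the Karp--Miller construction (distinct branches of the tree may carry incomparable $\omega$-sets, and merging them requires replaying one branch's accelerations from a configuration produced by the other, which need not dominate the root on the non-$\omega$ coordinates even in a strongly connected component). You assert this merging rather than prove it, and if it fails your single refined GVASS $\gvass_1$ loses runs: a run in which your chosen $j$ exceeds $c$ but some other $j'$ stays bounded is a run of $\gvass$ with no counterpart in $\gvass_1$, breaking the ``only if'' direction. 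The per-run statement plus branching over all $j$ is precisely how the paper sidesteps this.

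A secondary, fixable issue: you apply the cross-product construction uniformly, whereas the paper distinguishes $j \in C_i \cap C'_i$ (cross-product, rigidify, plus an arc-less compensating component for the difference $a'-a$; the first size component drops) from $j \in C_i \cap U'_i$ (merely constrain the final value to some $m \in \{0 \ldots c\}$; the third size component drops). In the $U'_i$ case your recipe of ``restoring the stored value on exit'' is underspecified, since the final value of $j$ is then determined by the final control state of the product automaton and must still be passed to the next component; this can be repaired (e.g.\ by first constraining the final value as the paper does), but as written the bookkeeping does not go through.
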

\begin{proof}
Wlog.~assume that the underlying graphs of all components $\vass_i$ are strongly connected.
Suppose that $\gvass$ does 
not satisfy $\Theta_2$, i.e.~condition~\eqref{eq:theta2} fails for some $i$ 
(condition~\eqref{eq:theta2bis} is treated symmetrically).
Thus all initial constrained coordinates can not be simultaneously increased arbitrarily, which means that
for some number $c$, in every pseudo-configuration reachable in $\vass_i$ from
$\klej{v_i}{\vec 0}$ via the relation $\ZrunUog{}{}{C_i}$, some of initial constrained coordinates $j \in C_i$ is bounded by $c$.
From the coverability tree for $\vass_i$ one can extract $c$ with a stronger property:
\begin{claim}
For every $C_i$-run $\pi$ in $\vass_i$ from $\klej{v_i}{\vec 0}$ there is
an initial constrained coordinate $j \in C_i$ which is bounded by $c$ in $\pi$.
\end{claim}
\noindent
Relying on the claim, we refine $\gvass$ by a finite family of GVASSes.
For every $j \in C_i \cap C'_i$ the family contains one GVASS $\gvass_j$, and for every $j \in C_i \cap U'_i$ the family contains
$c+1$ GVASSes $\gvass_{j, 0} \ldots \gvass_{j, c}$, as outlined below:

\paragraph{$j \in C_i \cap U'_i$:} Thus $j$ is  a final unconstrained coordinate.
We define GVASSes $\gvass_{j,0} \ldots \gvass_{j,c}$, where $\gvass_{j, m}$ differes from $\gvass$ only by
making the coordinate $j$ a final constrained coordinate in $\vass_i$, and fixing its value to $m$.

\paragraph{$j \in C_i \cap C'_i$:} Thus $j$ is a final constrained coordinate. Let $a$ and $a'$ be the values of initial and final vectors
$v_i$, $v'_i$ on coordinate $j$.
We define $\gvass_j$ by replacing $\vass_i$ with two components $\vass'$ and $\vass''$. 
$\vass'$ behaves exactly as $\vass_i$ with the only exception that the value of the $j$th coordinate is kept between $0$ and $c$.
This can be achieved using a cross-product of $\vass_i$ with a finite state automaton, with states $\{0, \ldots, c\}$, the initial state $a$,
the final state $a'$, and transitions induced by the $j$th coordinate of arcs in $E_i$.
This allows to set the $j$th coordinate of all arcs in $\vass'$ to $0$; in consequence, the coordinate $j$ can be moved to rigid coordinates 
of $\vass'$.
Thus $\vass'$ has $(c+1)$ times more states and arcs than $\vass_i$ but one less non-rigid coordinate.
The rigid vector of $\vass'$ is set to $a$ on coordinate $j$.  
The difference $a' - a$ is easily compensated by adding one arc-less component $\vass''$ to $\gvass_j$, connected to $\vass'$ by an arc
that adds $a' - a$ on coordinate $j$ and preserves all other coordinates.
\end{proof}

\bibliographystyle{plain}
\bibliography{bib}

\begin{thebibliography}{1}

\bibitem{Kosaraju82}
S.~Rao Kosaraju.
\newblock Decidability of reachability in vector addition systems (preliminary
  version).
\newblock In {\em {STOC}}, pages 267--281. {ACM}, 1982.

\bibitem{Mayr81}
Ernst~W. Mayr.
\newblock An algorithm for the general {P}etri net reachability problem.
\newblock In {\em {STOC}}, pages 238--246. {ACM}, 1981.

\bibitem{Mayr84}
Ernst~W. Mayr.
\newblock An algorithm for the general {P}etri net reachability problem.
\newblock {\em {SIAM} J. Comput.}, 13(3):441--460, 1984.

\end{thebibliography}
\end{document}